\newcommand{\ALG}{\textsc{Alg}}
\newcommand{\cA}{\mathcal{A}}
\newcommand{\cE}{\mathcal{E}}
\newcommand{\cP}{\mathcal{P}}
\newcommand{\cR}{\mathcal{R}}
\newcommand{\remove}[1]{}
\newcommand{\emdash}{\hspace{1mm}---\hspace{1mm}}
\newcommand{\ep}{\varepsilon}
\title{Dynamic sharing of a multiple access channel}
\author[lab1]{M.~Bienkowski}{Marcin Bienkowski}
\author[lab2]{M.~Klonowski}{Marek Klonowski}
\author[lab2]{M.~Korzeniowski}{Miroslaw Korzeniowski}
\author[lab3]{D.~R.~Kowalski}{Dariusz R.~Kowalski}
\address[lab1]{Institute of Computer Science, University of Wroc{\l}aw, Poland}
\address[lab2]{Institute of Mathematics and Computer Science, Wroc{\l}aw University of Technology, Poland}
\address[lab3]{Department of Computer Science, University of Liverpool, UK}
\keywords{distributed algorithms, multiple access channel, mutual exclusion}
\subjclass{C.1.4 Parallel Architectures, C.2.1 Network Architecture and Design, F.2.2 Nonnumerical Algorithms and Problems. Supported by Polish Ministry of Science and Higher
Education grants no N N206 2573 35 and N N206 1723 33, and by the Engineering and Physical
Sciences Research Council [grant number EP/G023018/1].
}
\begin{document}

\maketitle

In this paper we consider the mutual exclusion problem on a multiple access channel.
Mutual exclusion is one of the fundamental problems in distributed computing.
In the classic version of this problem, $n$ processes perform a concurrent
program which occasionally triggers some of them to use shared resources,
such as memory, communication channel, device, etc.
The goal is to design a distributed algorithm to control entries and exits to/from
the shared resource
in such a way that in any time
there is at most one process accessing it.
We consider both the classic and a slightly weaker version of mutual exclusion, 
called $\ep$-mutual-exclusion, 
where for each period of a process staying in the critical section the probability that there is some other process in the critical section is at most $\ep$.
We show that there are channel settings, where the classic mutual exclusion
is not feasible even for randomized algorithms, while $\ep$-mutual-exclusion is. 
In more relaxed channel settings, we prove an exponential gap between 
the makespan complexity of the classic mutual exclusion problem 
and its weaker $\ep$-exclusion version.
We also show how to guarantee fairness of mutual exclusion algorithms,
i.e., that each process that wants to enter the critical section will eventually succeed.

\section{Introduction}

In this paper we consider randomized algorithms for mutual exclusion: one of
the fundamental problems in distributed computing. We assume that there are
$n$ different processes labeled from $0$ to $n-1$ communicating  through a
multiple access channel (MAC).  The computation and communication proceed in
synchronous slots, also called {\em rounds}.  In~the mutual exclusion problem,
each process performs a concurrent program and occasionally requires exclusive
access to shared resources. The part of the code corresponding to this
exclusive access is called a {\em critical section}.  The goal is to provide 
a~mechanism that controls entering and exiting the critical section and
guarantees exclusive access at any time. The main challenge is that the
designed mechanism must be universal, in the sense that exclusive access must
be guaranteed regardless of the times of access requests made by other
processes.

\paragraph{\bf Multiple Access Channel (MAC)}

We consider a multiple access channel as both communication medium and the
shared-access device.  As a communication medium, MAC allows each process
either to transmit or listen to the channel at a round,\footnote{Most of the previous work on MAC,
motivated by Ethernet applications, assumed that a process can
transmit and listen simultaneously; our work instead follows the recent trends
of wireless applications where such simultaneous activities are excluded due to
physical constraints.} and moreover, if more than one process transmits, then
a {\em collision} (signal interference) takes place.
Depending on the devices used in the system, there are several additional
settings of MAC that need to be considered.  One of them is the ability of a
process to distinguish between background noise when no process transmits (also
called {\em silence}) and collision.  If such capability is present at each process,
we call the model {\em with collision detection} (CD for short); if no process
has such ability, then we call the setting {\em without collision detection}
(no-CD).  Another feature of the model is a constant access to the global clock (GC for
short) by all processes or no such access by any of them (no-GC).  The third
parameter to be considered is a knowledge of the total number of available
processes $n$ (KN for short) or the lack of it (no-KN).



\paragraph{\bf Mutual Exclusion Problem.}

In this problem, each concurrent process executes a protocol
partitioned into the following four sections:
\begin{description}
\item[\it \ Entry] the part of the protocol executed in preparation for entering the critical section;
\item[\it \ Critical] the part of the protocol to be protected from concurrent execution;
\item[\it \ Exit] the part of the protocol executed on leaving the critical section;
\item[\it \ Remainder] the rest of the protocol.
\end{description}
These sections are executed cyclically in the order: {\em remainder, entry,
critical, and exit}.  Intuitively, the remainder section corresponds to local
computation of a process, and the critical section corresponds to the access to the
shared object (the channel in our case); though the particular purpose and
operations done within each of these sections are not a~part of the problem.
Sections entry and exit are the parts that control 
switching between remainder and critical sections in a process,
in order to assure some desired properties of the whole system.

In the traditional mutual exclusion problem, as defined in~\cite{AW,Ly} in the
context of shared-memory model, the adversary controls the sections remainder
and critical.  In particular, she controls their duration in each cycle,
subject only to the obvious assumptions that this duration in each cycle is
finite or the last performed section is the remainder one.  The mutual
exclusion algorithm, on the other hand, provides a protocol for the entry and
exit sections of each process.  In this sense, the mutual exclusion problem can
be seen as a game between the adversary controlling the lengths of remainder
and critical sections of each process (each such section for each process may
have different length) and the algorithm controlling entry and exit sections.
The goal of the algorithm is to guarantee several useful properties of the
execution (to be defined later), while the goal of the adversary is to prevent
it.  Note that the sections controlled by the adversary and those controlled by
the algorithm are interleaved in the execution.  Additionally, in order to make
the game fair, it is typically assumed that every variable used by the
algorithm, i.e., in the entry and exit sections, cannot be modified 
by the adversary in the critical and remainder sections, and vice versa, i.e., 
no variables used by the adversary in the remainder and critical sections can be accessed by the algorithm.

In the model of communication over MAC, 
a process in the entry or the exit section can do the following in a single round: 
perform some action on the channel (either transmit a message or listen), 
do some local computation, and change its section
either from entry to critical or from exit to remainder.
We assume that changing sections occurs momentarily
between consecutive rounds, i.e., in each round
a process is exactly in one section of the protocol.

Since a multiple-access channel is both the only communication medium and
the exclusively shared object, additional constraints, different from
the classic ones regarding e.g., shared memory objects, must be imposed:
\begin{itemize}
\item
no process in the remainder section is allowed to transmit on the channel;
\item
a process in the critical section has to transmit a message on the channel
each round until it moves to the exit section, 
and each such message must be labelled {\em critical}; we call them
{\em critical messages}.
\end{itemize}
If any of these conditions was violated, the adversary would have
an unlimited power of creating collisions on the channel, 
and thus preventing any communication. 

A classic mutual exclusion algorithm should satisfy the following three properties for any round
$i$ of its execution:
\begin{description}
\item[\it \ Exclusion] 
at most one process is in the {\em critical} section in round $i$.
\item[\it \ Unobstructed exit]
if a process $p$ is in the exit section at round $i$, then process $p$ 
will switch to the remainder section eventually after~round~$i$.
\item[\it \ No deadlock] 
if there is a process in the entry section at round $i$, then 
{\em some} process will enter the critical section eventually after round $i$.
\end{description}
To strengthen the quality of service guaranteed by mutual exclusion algorithms,
the following property{\emdash}stronger than no-deadlock{\emdash}has been considered: 
\begin{description}
\item[\it \ No lockout]
if a process $p$ is in the entry section at round $i$, then {\em process $p$}
will enter the critical section eventually after~round~$i$.
\end{description}
Note that{\emdash}to some extent{\emdash}this property ensures fairness: 
each process demanding an~access to the critical section will eventually get it.

As we show, in some cases the exclusion condition is impossible or very costly to achieve. 
Therefore, we also consider a slightly weaker condition: 

\begin{description}
\item[\it \ $\ep$-exclusion] 
for every process $p$ and for every time interval in which $p$ is continuously in 
the critical section, 
the probability that in any round of this time interval 
there is another process being in the critical section is at most $\ep$. 
\end{description}
Intuitively, $\ep$-exclusion guarantees mutual exclusion ``locally'', i.e., for every
single execution of the critical section by a process, with probability at least $1-\ep$.
The version of the problem satisfying $\ep$-exclusion condition
is called {\em $\ep$-mutual-exclusion}.

\paragraph{\bf Complexity Measure.}
\remove{
The basic measure of complexity is the maximum length of a~single entry or exit section 
performed by a process in an execution, not counting rounds when 
some process is in the critical section.
We call this measure {\em latency}, by the analogy to other similar measures
maximizing time of a single run of a protocol over a number of runs.
An existence of an upper bound on latency automatically implies no-lockout property.

In order to define {\em competitive latency} measure, we need more formal
definition of an adversarial strategy and latency measure under such strategy.
Let $\cP$ be a strategy of the adversary, defined as
a set of $n$ sequences, where each sequence corresponds to a different process
and contains, subsequently interleaved, lengths of remainder and critical sections
of the corresponding process.
We assume that each sequence is either infinite or of even length;
the latter condition means that after the last critical section 
the corresponding process runs the remainder section forever.
For a given mutual exclusion algorithm $\cR$, adversarial strategy $\cP$
and a positive integer $i$, we define $L(\cR,\cP,i)$ as the latency of algorithm
$\cR$ run against strategy $\cP$ up to the end of round $i$.
We assume that all entry and exit sections that are in progress at the end of round $i$
do not count to $L(\cR,\cP,i)$. We define $L_\mathrm{OPT}(\cP,i)$ to be a minimum of $L(\cA,\cP,i)$ 
taken over all deterministic algorithms $\cA$.
We say that algorithm $\cR$ is {\em $c$-latency-competitive} if 
\[
\sup_{\cP}\sup_{i\ge 1} \frac{L(\cR,\cP,i)}{L_\mathrm{OPT}(\cP,i)} \leq c
\ .
\]
Note that if $\cR$ is a randomized algorithm, then $L(\cR,\cP,i)$ is a random variable. 
In such case, we understand $L(\cR,\cP,i)$ in the definition above as an expected 
maximum latency.
}

We use the {\em makespan} measure, as defined in~\cite{CGKP}
in the context of deterministic algorithms.
Makespan of an execution of a given deterministic mutual exclusion algorithm is defined as 
the maximum length of a~time interval in which there is some process in the entry section and 
there is no process in the critical section.
Taking maximum of such values over all possible executions
defines the makespan of the algorithm.
In order to define expected makespan, suitable for randomized algorithms
considered in this work, we need more formal
definitions of an adversarial strategy. 
Let $\cP$ be a strategy of the adversary, defined as
a set of $n$ sequences, where each sequence corresponds to a different process
and contains, subsequently interleaved, lengths of remainder and critical sections
of the corresponding process.
We assume that each sequence is either infinite or of even length;
the latter condition means that after the last critical section 
the corresponding process runs the remainder section forever.
For a given mutual exclusion algorithm $\ALG$ and adversarial strategy $\cP$,
we define $L(\ALG,\cP)$ as a random variable equal to the maximum length 
of a time interval in which there is some process in the entry section and 
there is no process in the critical section in an execution
of $\ALG$ run against fixed strategy $\cP$.
The expected makespan of algorithm $\ALG$ is defined as 
the maximum of expected values of $L(\ALG,\cP)$, taken 
over all adversarial strategies $\cP$.
Note that every algorithm with makespan bounded for all executions
satisfies no-deadlock property, but not necessarily no-lockout.

For the $\ep$-mutual-exclusion problem, defining makespan is a bit more subtle.
We call an execution {\em admissible} if the mutual exclusion property is
always fulfilled, i.e., no two processes are in the critical section in the same round.
Then in the computation of the (expected) makespan, we neglect non-admissible
executions.

\subsection{Our Results}

We consider the mutual exclusion problem and its weaker $\ep$-exclusion version
in the multiple access channel.  Unlike the previous paper~\cite{CGKP}, where
only no-deadlock property was guaranteed, we also focus on fairness. Also in
contrast to the previous work on mutual exclusion on MAC, we mostly study
randomized solutions.  In the case of the mutual exclusion problem, we allow
randomized algorithms to have variable execution time but they have to be
always correct. On the other hand, a randomized solution for the
$\ep$-mutual-exclusion problem is allowed to err with some small probability
$\ep$. Thus, for the former problem, we require Las Vegas type of solution,
whereas for the latter we admit Monte Carlo algorithms.  Note that very small
(e.g., comparable with probability of hardware failure) risk of failure (i.e.,
situation wherein two or more processes are in the critical section at the same
round) is negligible from a~practical point of view. 

We show that for the most severe channel setting, i.e., no-CD, no-GC and no-KN,
mutual exclusion is not feasible even for randomized algorithms (cf.~Section~\ref{sec:lower}).

In a more relaxed setting, there is an exponential gap between the complexity
of the mutual exclusion problem and the $\ep$-mutual-exclusion problem.
Concretely, we prove that the expected makespan of (randomized) solutions for
the mutual exclusion problem in the no-CD setting is $\Omega(n)$, even if the
algorithm knows $n$, has access to the global clock
(cf.~Section~\ref{sec:lower}), and even if only no-deadlock property is
required.  On the other hand, for the $\ep$-mutual-exclusion problem, we
construct a randomized algorithm, requiring only the knowledge of~$n$,
which guarantees no-lockout property, and whose makespan is $O(\log n \cdot \log
(1/\ep))$ (cf.~Sections~\ref{sec:random:kn} and~\ref{sec:lockout}).

When collision detection is available and only no-deadlock property is
required, we show that the makespan of any mutual exclusion algorithm is at
least $\Omega(\log n)$ (cf.~Section~\ref{sec:lower}) and we construct an
algorithm for the $\ep$-mutual-exclusion problem with expected makespan $O(\log
\log n + \log (1/\ep))$ (cf.~Section~\ref{sec:random:collision}).  Further, we
show how to modify this algorithm to guarantee no-lockout property as well; its
expected makespan becomes then $O(\log n + \log (1/\ep))$
(cf.~Sections~\ref{sec:random:collision} and~\ref{sec:lockout}). 

Finally, if we do not require no-lockout property,  we show how to solve the
$\ep$-mutual-exclusion problem in makespan $O(\log n \cdot \log (1/\ep))$,
where only the global clock is available (cf.~Section~\ref{sec:random:clock}).

We also present a generic method that, taking a mutual exclusion algorithm with
no-deadlock property, turns it into the one satisfying stronger no-lockout
condition. This method applied to the deterministic algorithms
from~\cite{CGKP} produces efficient deterministic solutions satisfying the
no-lockout property. 

Due to space limitations, the missing details and proofs will appear in the full version of
the paper.

\subsection{Previous and Related Work}

The multiple access channel is a well-studied model of communication. 
In many problems considered in this setting, one of the most important issues
is to assure that successful transmissions occur in the computation.
These problems are often called {\em selection problems}.
They differ from the mutual exclusion problem by the fact that 
they focus on successful transmissions within a bounded length period,
while mutual exclusion provides control mechanism for 
dynamic and possibly unbounded computation.
In particular, it includes recovering from 
long periods of cumulative requests for the critical section as well as from 
long periods containing no request.
Additionally, selection problems were considered typically in the context of Ethernet
or combinatorial group testing, and as such they allowed a~process
to transmit and to listen simultaneously, which is not the case in our model
motivated by wireless applications.
Selection problems can be further split into two categories.
In the static selection problems, it is assumed that a subset of processes
become active at the same time and a subset of them must eventually transmit successfully.
Several scenarios and model settings, including parameters considered
in this work such as CD/no-CD, GC/no-GC, KN/no-KN, randomization/determinism, 
were considered in this context, see e.g.,~\cite{BGI,Cap,CMS,GW,JKZ,Kow,KM,NO,TM,Wil}.
In the {\em wake-up} problem, processes are awaken in (possibly) different
rounds and the goal is to assure that there will be a round with successful transmission
(``awakening'' the whole channel) shortly after the first process is awaken, 
see, e.g.,~\cite{CGKR-05,GPP,JS}. 

More  dynamic kinds of problems, such as transmission of dynamically arriving packets, 
were also considered in the context of MAC. In the (dynamic) packet transmission problem, 
the aim is to obtain bounded throughput and bounded latency.
Two models of packet arrival were considered: stochastic (cf.,~\cite{Gold-04}) and
adversarial queuing (cf.,~\cite{Ben-05,CKR}).
There are two substantial differences between these settings and our work.
First, the adversaries imposing dynamic packet arrival are different than the adversary
simulating execution of concurrent protocol. 
Second, as already mentioned in the context of selection problems, these papers 
were inspired by
Ethernet applications where it is typically allowed to transmit and listen simultaneously.

In a very recent paper~\cite{CGKP}  {\em deterministic} algorithms for mutual
exclusion problem in MAC under different settings (CD, GC, KN) were studied.
The authors proved that with none of those three characteristics mutual
exclusion is infeasible.  Moreover, they presented an optimal{\emdash}in terms
of the makespan measure{\emdash}$O(\log n)$ round algorithm for the model with
CD.  They also developed algorithms achieving makespan $O(n\log^2 n)$ in the
models with GC or KN only, which, in view of the lower bound $\Omega(n)$ 
on deterministic solutions proved for any model with no-CD, is close to optimal.  
Our paper differs from~\cite{CGKP} in three ways.
First, we consider both deterministic and randomized solutions.
Second, for the sake of efficiency we introduce the $\ep$-mutual-exclusion problem.
Third, we study fairness of protocols, which means that we consider
also no-lockout property.

\section{Lower Bounds for the Mutual Exclusion Problem}
\label{sec:lower}

In our lower bounds, we use the concept of transmission schedules to capture
transmission/listening activity of processes in the entry or exit section.
Transmission schedule of a~process $p$ can be regarded as a binary sequence
$\pi_p$ describing  the subsequent communication actions of the process. The
sequence can be finite or infinite.  For  non-negative integer $i$,
$\pi_p(i)=1$ means that process $p$ transmits in round~$i$ after starting its
current section, while  $\pi_p(i)=0$  means that the process listens in round
$i$.  We assume that round~$0$ is the round in which the process starts its
current run of the entry or the exit section.

The following results extend the lower bounds and impossibility 
results for deterministic mutual exclusion proved in \cite{CGKP} to  
randomized solutions. All the presented lower bounds work 
even if we do not require no-lockout, but a weaker no-deadlock property.

\begin{theorem}
\label{t:lower-not-feasible}
There is no randomized mutual exclusion algorithm with 
no-deadlock property holding with a positive probability in the setting 
without collision detection, without global clock and without 
knowledge of the number $n$ of processes.
\end{theorem}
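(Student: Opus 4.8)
The plan is to derive a contradiction by showing that an algorithm which guarantees \emph{exclusion} in every execution cannot, in this setting, also guarantee \emph{no-deadlock} with any positive probability. Suppose such an algorithm existed. First I would run it against the simplest adversary: one that wakes a single process $p$ (ending its remainder section) and keeps every other process in its remainder section forever. Since the algorithm has no knowledge of $n$, its behaviour in this run is fixed once and for all, and if no-deadlock holds with some probability $\rho>0$ then with probability $\ge\rho$ this solitary $p$ eventually enters the critical section, so there is a finite horizon $T$ for which
\[
q \;:=\; \Pr[\, p \text{ enters the critical section by local round } T \,] \;>\;0 .
\]
Throughout such a run $p$ hears only silence, so its trajectory up to entering is a function of its coins alone; and since there are only finitely many transmit/listen/enter patterns of length $\le T$, each process $i$ has a fixed pattern $\rho_i$, with transmit-set $A_i$ and entry round $\tau_i\le T$, that its solitary silence-only run follows with some probability $p_i>0$.

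The engine is an indistinguishability lemma powered by the absence of collision detection: a listening process hears a message only when \emph{exactly one} other process transmits, whereas a round with zero transmitters (silence) and one with two or more transmitters (a collision) are indistinguishable. Hence, in any execution in which \emph{no round ever has exactly one transmitter}, every process receives precisely the feedback it would receive when alone, and so follows its solitary silence-only trajectory and enters the critical section on the same (local) schedule. The goal is therefore to superimpose several processes' patterns so that the number of transmitters is never one, while forcing at least two of them into the critical section in the same round; this yields an exclusion violation with positive probability and contradicts correctness.

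Here the remaining two restrictions enter. The prohibition on knowing $n$ lets the adversary use as many processes as it wishes; the prohibition on a global clock lets it wake the chosen processes at arbitrary relative times. The latter freedom is what makes the masking possible: if the processes shared a clock they could transmit in distinct, identifier-determined slots and so guarantee single transmissions, but without it the adversary can shift their wake-ups so that transmissions collide. Concretely, for each candidate process I know (from the algorithm's distribution) a fixed pattern $\rho_i$ with transmit-set $A_i$ and entry round $\tau_i$ of probability $p_i>0$; the aim is to select a set $S$ of at least two processes, their target patterns, and delays $d_i$ so that the translated transmit-sets $A_i+d_i$ have \emph{even} multiplicity at every round before a common entry round, and so that two of the processes enter in that same round. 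On the independent coin event $\bigcap_{i\in S}\{\,i\text{ realizes }\rho_i\,\}$, of probability $\prod_{i\in S}p_i>0$, the lemma then applies by induction on rounds: the channel never carries a single transmitter, each process of $S$ sees only silence, hence follows $\rho_i$ and enters on schedule, and two processes occupy the critical section simultaneously.

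The main obstacle is exactly this selection of processes, patterns, and delays achieving even multiplicity of the overlaid transmit-sets, for schedules the adversary does not control and which may depend cleverly on identifiers. With a common wake-up the task reduces to finding a linear dependence over $\mathrm{GF}(2)$ among the transmit-indicator vectors, which is automatic once there are more vectors than coordinates; but the algorithm can evade that by placing its identifier-keyed transmissions in distinct rounds, making the vectors independent, so the argument genuinely needs the nonzero shifts afforded by the missing clock and must translate the sets $A_i$ into register. Producing such shifts for adversarially-unknown finite sets is the technical heart, and I expect to obtain it by adapting the deterministic construction of~\cite{CGKP} and then lifting it to the randomized setting via the independent positive-probability coin events above. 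Two minor points are treated separately: the boundary rounds around the common entry, and the degenerate case of a process that enters after only listening---the latter is favourable, since two such processes, woken to enter together, collide into the critical section while transmitting nothing beforehand.
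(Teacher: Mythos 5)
Your skeleton is the right one, and it is the same machinery the paper deploys in the one lower-bound proof it writes out (Theorem~\ref{t:lower-n}): extract for each process a positive-probability solitary transmission schedule that ends with entry to the critical section; use the fact that without collision detection a round with zero transmitters is indistinguishable from one with two or more, so that in any execution with no lone transmitter every process stays on its solitary trajectory; and superimpose schedules to force two simultaneous entries. Your handling of the randomization (fixing one positive-probability pattern per process and intersecting the independent coin events) is also correct.

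The gap is that the entire technical content of the theorem sits in the step you defer. You must actually exhibit the set $S$, the patterns, and the shifts $d_i$ such that (a) the superimposed transmit-sets $A_i+d_i$ never have multiplicity exactly one (parity is a sufficient but unnecessary surrogate), (b) at least two entry rounds $\tau_i+d_i$ coincide, and (c) no process of $S$ enters the critical section strictly earlier --- otherwise that process begins transmitting critical messages in every subsequent round, possibly as a lone transmitter, and the indistinguishability induction collapses. Writing that you ``expect to obtain it by adapting the deterministic construction of~\cite{CGKP}'' leaves the decisive combinatorial lemma unproved; moreover, your $\mathrm{GF}(2)$ heuristic does not survive the introduction of per-process shifts (translation destroys the common coordinate system), and even with a common wake-up a linear dependence yields only condition (a), not (b) or (c). For contrast, the paper's proof of Theorem~\ref{t:lower-n} secures all three conditions with zero shifts via an explicit fixed-point construction (iteratively deleting potential lone transmitters and uniquely-shortest schedules), which works there only because the schedules are shorter than $n/2$ while $n$ processes are available; in the no-KN, no-GC setting of the present theorem that counting is unavailable, and constructing the shifts is precisely the heart of the matter. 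As written, the proposal is a plausible plan rather than a proof.
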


\begin{theorem}
\label{t:lower-log}
The expected makespan of any randomized mutual exclusion algorithm 
is at least $\log n$, even in the setting 
with collision detection, with global clock and with 
knowledge of the number $n$ of processes.
\end{theorem}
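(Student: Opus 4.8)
The plan is to force high contention with an adaptive adversary and to show that no process can safely switch to the critical section until the contention has been resolved down to a single candidate, which needs $\log n$ rounds even when collision detection is available. Throughout I only use the no-deadlock property, so the bound applies in the weakest required regime.

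First I would observe that the only way a process $p$ can move from the entry to the critical section while guaranteeing Exclusion is when the feedback it has observed is already inconsistent with the existence of a second process that could also be entering; on a channel with collision detection the per-round feedback is only ternary (silence, one successful transmission, or collision), so $p$ cannot certify that it is alone unless the feedback has singled it out. I will make this precise through an indistinguishability argument. I maintain a candidate set $C_t$ together with the invariant that the feedback declared in rounds $0,\dots,t-1$ is consistent with (at least) two different choices of the active set, each of which keeps all of $C_t$ in the entry section. As long as $|C_t|\ge 2$, if the algorithm were to let some process enter in round $t$, I could exhibit a second admissible strategy $\cP'$, indistinguishable from the first through round $t$, under which a second candidate is also in the critical section in round $t$; this yields an execution with two processes simultaneously in the critical section, contradicting Exclusion. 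Hence no process may enter while $|C_t|\ge 2$, and the makespan is at least the number of rounds needed to drive $|C_t|$ to one.

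The combinatorial core is the halving adversary that keeps $|C_t|$ large. In round $t$ the candidates split into transmitters $T_t$ and listeners $L_t=C_t\setminus T_t$ according to their schedules $\pi_p$ and the common feedback so far. If $|T_t|\ge 2$, declaring a collision is consistent with the active set being $T_t$, while declaring silence is always consistent with the active set being $L_t$. The adversary therefore keeps whichever of $T_t,L_t$ is larger (defaulting to silence when $|T_t|\le 1$), so that $|C_{t+1}|\ge |C_t|/2$ while no single successful transmission ever occurs. Starting from $C_0=\{0,\dots,n-1\}$ this keeps $|C_t|\ge 2$ for every $t<\log_2 n$, which together with the previous paragraph gives makespan at least $\log n$ for deterministic algorithms, reproving the bound of \cite{CGKP} in a form suited to randomization.

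The hard part is lifting this to randomized algorithms and to the oblivious, fixed-strategy adversary underlying our makespan measure, since the transmit/listen decisions, and hence the sets $T_t,L_t$, are now random while the adversary must commit to a single strategy $\cP$ in advance. I would replace the pointwise halving by an averaging step: running the construction on the algorithm's \emph{expected} behaviour, at each level one of the two feedback continuations retains at least half of the expected number of surviving candidates, which lets me fix in advance a nested chain $C_0\supseteq C_1\supseteq\dots\supseteq C_{\log_2 n}$ and take the bottom set as the fixed active set defining $\cP$. The main obstacle is then controlling the probability, over the algorithm's coins, that the realized feedback follows the chosen branch and that no premature entry occurs, so that the expectation of the first-entry time, and hence of $L(\ALG,\cP)$, stays $\Omega(\log n)$. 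The factor-two shrink per round, forced by the essentially binary (silence versus collision) nature of contention even under collision detection, is exactly what caps the rate and produces the logarithmic, rather than larger, bound.
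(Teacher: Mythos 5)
The paper defers the proof of Theorem~\ref{t:lower-log} to its full version, so the only available point of comparison is the adversary construction the paper gives for Theorem~\ref{t:lower-n}; measured against that technique, your proposal has a genuine gap at its central step. Your halving invariant guarantees that after $t<\log_2 n$ rounds the declared feedback is consistent with every active set $A\subseteq C_t$ of size at least two, and hence that at least two candidates have identical \emph{channel} histories. It does not follow that ``no process may enter while $|C_t|\ge 2$.'' A process $p\in C_t$ also knows its own identifier and its own past transmit/listen actions, which differ from those of the other candidates, so $p$ may unilaterally decide to enter the critical section at round $t$ without any other candidate entering at that round in any consistent execution; in that case the second admissible strategy $\cP'$ you invoke, with two processes simultaneously in the critical section, need not exist, there is no contradiction with Exclusion, and the makespan of the constructed execution is just $t$. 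This is exactly the difficulty the paper's proof of Theorem~\ref{t:lower-n} handles with its \emph{second} elimination rule (the even-indexed steps, which discard a process whose transmission schedule is the unique shortest one), iterated to a fixed point so that at least two surviving processes finish their entry sections in the same round. Your construction has no analogue of that rule, and adding one is not free here: the halving steps already shrink the candidate set to size $2$ after $\log_2 n$ rounds, so there is no slack left to discard uniquely fast processes, and the budget accounting must be redone (e.g., by stopping the halving early), which affects the constant in the bound.

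The second gap is one you flag yourself: the passage from the adaptive, deterministic argument to a fixed strategy $\cP$ against a randomized algorithm. ``Running the construction on the expected behaviour'' is not a proof; the sets $T_t,L_t$ are now random, the probability that the realized execution follows your pre-committed chain of feedback answers can be vanishingly small, and nothing in the sketch converts this into a lower bound on the expectation of $L(\ALG,\cP)$. The paper's route for Theorem~\ref{t:lower-n} suggests the cleaner fix: since Exclusion must hold with probability $1$ for a (Las Vegas) mutual exclusion algorithm, it suffices to exhibit a single positive-probability execution violating it; one first uses the assumed bound on the expected makespan together with the probabilistic method to extract, for each process, a fixed fast transmission schedule that occurs with positive probability in its solo execution, and only then runs the combinatorial elimination on those deterministic schedules. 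As written, your argument establishes neither the deterministic nor the randomized version of the bound.
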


\begin{theorem}
\label{t:lower-n}
The expected makespan of any randomized mutual exclusion algorithm 
is at least $n/2$ in the absence of collision detection capability, even in the setting 
with global clock and with knowledge of the number $n$ of processes.
\end{theorem}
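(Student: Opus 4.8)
The plan is to prove the bound by an indistinguishability argument that exploits the defining weakness of the no-CD model: a listening process cannot tell silence (nobody transmits) from a collision (two or more transmit). I would exhibit a single adversarial strategy $\cP$ under which the expected time before any process enters the critical section is at least $n/2$. Since the whole interval from the wake-up until the first critical entry is an interval in which some process is in the entry section and none is in the critical section, its length is at most $L(\ALG,\cP)$, so lower-bounding the expected first-entry time lower-bounds the expected makespan.

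Concretely, I would first let $\cP$ wake all $n$ processes into the entry section at round $0$ (this is allowed even with a global clock and knowledge of $n$). Fixing the random coins, call a round a \emph{reveal} if exactly one not-yet-critical process transmits in it; only reveal rounds convey information, because in every other round each listener hears noise, which in the no-CD model is identical to the silence heard by a process running in isolation. Thus, as long as no reveal has occurred, every process behaves exactly as in its lonely execution: it follows a fixed transmission schedule $\sigma_p$ and would switch to the critical section at its lonely entry time $e_p$. The first critical entry therefore happens at $\min_p e_p$ unless a reveal intervenes earlier, and by the exclusion property at most one process may attain this minimum with its lonely view intact.

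The core of the argument is a fooling construction. Suppose, for contradiction, that the expected makespan is smaller than $n/2$; then with positive probability the first critical entry, say of process $p$ at round $t$, occurs with $t<n/2$, so fewer than $n/2$ reveals have taken place and $p$ has individually heard fewer than $n/2$ other processes. I would then change the set of awake processes, keeping the coins fixed, so that (i) $p$'s feedback, and hence its decision to enter at round $t$, is unchanged, and (ii) a second process is simultaneously driven into the critical section, violating exclusion. Since exclusion must hold with probability one for a Las Vegas algorithm, a positive-probability violation under any wake-up pattern is a contradiction, which forces the expected makespan up to $n/2$. Passing from a fixed coin outcome to the expected bound is then handled by an averaging (Yao-type) argument over the coins against this fixed family of wake-up patterns.

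The hard part is step (ii), the masking issue. Turning a process off, or on, can change the number of transmitters in a round from at least two down to exactly one, creating a spurious reveal that would alter $p$'s view or prematurely distinguish the planted second process. The delicate combinatorial step is to choose the modified wake-up set so that the ``zero-or-at-least-two transmitters'' invariant is preserved in every round that $p$ or the planted process listens in, while still arranging that the planted process sees precisely its own lonely execution and reaches its lonely entry time exactly at round $t$. This is where the counting enters: because at least $n/2$ processes were never individually heard, there is enough freedom to pick maskers realizing both constraints, and the resulting packing bound is what yields the $n/2$ threshold, matching the round-robin-style linear upper bound up to the constant.
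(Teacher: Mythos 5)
Your high-level strategy---exploit the silence/collision indistinguishability of the no-CD model to fool at least two processes into entering the critical section together---is the right one, and you correctly locate the difficulty in the ``masking'' step. But that step is the entire proof, and you leave it as an assertion (``there is enough freedom to pick maskers'') rather than a construction. Moreover, your setup makes it harder than it needs to be. Starting from the all-awake execution and perturbing the wake-up set afterwards creates two problems you do not resolve: (i) removing or adding a process can turn an ``at least two transmitters'' round into a spurious reveal in a cascading way, because each process's behaviour in later rounds depends on the feedback it received in earlier rounds, so the ``zero or at least two'' invariant must be maintained against a moving target; and (ii) even if you preserve $p$'s view, you need a second process whose \emph{lonely} entry time is exactly $t$, and nothing in your argument produces one---a generic second process has a different lonely entry time, and arranging a later but overlapping entry would require that process to miss $p$'s critical messages, which is yet another unmet masking requirement.

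The paper's proof goes in the opposite direction and resolves both issues explicitly. For each process $p$ it considers the lonely execution (no other process wakes up for $n/2$ rounds), where the feedback is provably always noise, and uses the probabilistic method (expected makespan $c<n/2$) to fix a deterministic transmission schedule $\pi_p$ under which $p$ enters the critical section within $n/2-1$ rounds. It then prunes the set $\{0,\dots,n-1\}$ by alternately deleting any process that would be a \emph{sole} transmitter in some round of $[1,n/2-1]$ and any process with a \emph{uniquely} shortest schedule; each of the $n/2-1$ rounds accounts for at most two deletions, so the fixed point $P^*$ has size at least $2$, no round has a single transmitter among $P^*$, and at least two members of $P^*$ share the shortest schedule length. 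Waking exactly the processes of $P^*$ at round $1$ then keeps every member's view identical to its lonely execution (always noise), so with positive probability at least two of them follow their schedules and enter the critical section in the same round, contradicting exclusion, which a Las Vegas algorithm must guarantee with probability $1$. To repair your proposal you would essentially have to import this subset-selection argument, at which point the all-awake starting point and the Yao-type averaging step become unnecessary.
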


\begin{proof}
To arrive at a contradiction, let $\cR$ be a randomized mutual exclusion algorithm,
whose expected makespan is $c$, where $c < n/2$.
We show that there exists an execution violating mutual exclusion.

%
Let $\cE^*_p$, for process $p$, be the set of all possible executions of the
first entry section of algorithm $\cR$ by process $p$ under the assumption that
it starts its first entry section in the global round $1$ and there is no other
process starting within the first $n/2$ rounds.  Note that during each
execution in $\cE^*_p$ process $p$ hears only noise (i.e., silence or
collision, which are indistinguishable due to the lack of collision detection)
from the channel when listening.  Observe also that the optimum algorithm needs
only one round to let process $p$ enter the critical section under the
considered adversarial scenario.  Therefore, by the probabilistic method, there
is an execution $\cE_p$ in set $\cE^*_p$ where process $p$ enters the critical
section within the first $n/2-1$ rounds. 
Let $\pi_{p}$ be the transmission
schedule of process $p$ during $\cE_p$.

Consider all sequences $\pi_p$ over all processes $0\le p < n$.
We construct execution $\cE$ contradicting mutual exclusion as follows.
First, we need to select a set of processes that start their first entry sections
in round $1$, while the others stay in the remainder section till at least round $n/2$.
Let $P_0=\{0,\ldots,n-1\}$.
For every non-negative integer $j$,
we define recursively 
\begin{eqnarray*}
P_{2j+1}
&=&
P_{2j} \setminus \left\{p\in P_{2j}: \exists_{i\in [1,n/2-1]} \left(\pi_p(i) = 1 
\ \& \ \forall_{q\in P_{2j},q\ne p} \ \pi_q(i)=0\right) \right\} \enspace, \\
P_{2j+2}
&=&
P_{2j+1} \setminus \left\{p\in P_{2j+1}: \exists_{i\in [1,n/2-1]} \left(|\pi_p| = i \ \& 
\ \forall_{q\in P_{2j+1}} \ |\pi_q|>i \right) \right\}
\enspace .
\end{eqnarray*}
Intuitively, set $P_{2j+1}$ is obtained from $P_{2j}$ by removing processes $p$
that could be single transmitters in some round in the interval $[1,n/2-1]$
while transmitting according to their schedules $\pi_p$.  Set $P_{2j+2}$ is
constructed by removing a process with the shortest transmission schedule, if
there is only one such process.  Observe that sequence $\{P_j\}_{j\ge 0}$ is
bounded and monotonically non-increasing (in the sense of set inclusion),
therefore it stabilizes on some set $P^*$.  Observe that
\begin{enumerate}
\item
$|P^*|\ge 2$, since for each round $i\in [1,n/2-1]$ there is at most one
process removed from some set $P_{2j}$ while constructing the consecutive set
$P_{2j+1}$ (after such removal no remaining process has $1$ in position $i$ of
its schedule) and at most one process removed from some set $P_{2j'+1}$ while
constructing the consecutive set $P_{2j'+2}$ (after such removal no remaining
process $p$ finishes its transmission schedule $\pi_p$ in round~$i$); as there
are $n/2-1$ considered rounds, at most $n-2$ processes can be removed throughout the construction;
\item
there is no round $i\in [1,n/2-1]$ such that there is only one process $p\in P^*$
satisfying $\pi_p(i)=1$; this follows from the fact that $P^*$ is a fixed point
of the sequence $\{P_j\}_{j\ge 0}$,
i.e., it does not change while applying the odd-step rule of the construction;
\item
there are at least two processes $p,q\in P^*$ with the shortest transmission
schedules $\pi_p,\pi_q$, i.e., $|\pi_p|=|\pi_q|$ and for every process $r\in
P^*$, $|\pi_r|\ge |\pi_p|$; this again follows from the fact that $P^*$ is a fixed point of the sequence $\{P_j\}_{j\ge 0}$,
i.e., it does not change while applying the even-step rule of the construction.
\end{enumerate}
Having subset $P^*$ of processes, the adversary starts first entry sections for
all processes in $P^*$ in the very first round, while she delays others (they
remain in the remainder section) by round $n/2$.  
Note that before round $1$ of the constructed execution $\cE$, 
a process $p\in P^*$ cannot distinguish $\cE$ from $\cE_p$, therefore
it may decide to do the same as in $\cE_p$, i.e., to set its first position of
transmission schedule to $\pi_p(1)$.  If this happens for all processes in
$P^*$, by the second property of this set there is no single transmitter in
round $1$, and therefore all listening processes hear the noise (recall that
silence is not distinguishable from collision in the considered setting).
This construction and the output of the first round can be inductively extended
up to round $|\pi_p|$, where $p\in P^*$ is a process with the shortest schedule
$\pi_p$ among processes in $P^*$.  This is because from the point of view of a
process $q\in P^*$ the previously constructed prefix of $\cE$ is not
distinguishable from the corresponding prefix of execution $\cE_q$; indeed, the
transmission schedules are the same and the feedback from the channel is
silence whenever the process listens.  Finally, by the very same reason, at the
end of round $|\pi_p|$ all processes $q\in P^*$ with $|\pi_q|=|\pi_p|$ are
allowed to do in $\cE$ the same action as in $\cE_q$, that is, to enter the
critical section.  By the third property of set $P^*$, there is at least one
such process $q\in P^*$ different than $p$.  This violates 
the exclusion property that should hold for the constructed execution $\cE$.
\end{proof}

\section{Algorithms for the $\ep$-Mutual-Exclusion Problem}
\label{sec:random}

In this section, we present randomized algorithms solving the
$\ep$-mutual-exclusion problem for various scenarios, differing in the channel
capabilities (e.g., CD/no-CD, KN/no-KN, GC/no-GC).  The algorithms presented in
this section, work solely in entry sections, i.e., their exit sections are
empty; these algorithms guarantee only no-deadlock property.
However, in Section~\ref{sec:lockout}, we
show how to add exit section subroutines to most of our algorithms to
guarantee the no-lockout property while keeping bounded makespan.  In
our algorithms, we extend some techniques developed in the context of other
related problems, such as the wake-up problem~\cite{JS} and the leader
election problem~\cite{Wil}.

Throughout this section, we use the following notation.
We say that there is a {\em successful transmission} in a given round if in this round
one process transmits and other processes do not transmit.
By saying that a process {\em resigns}, we mean that it will not 
try to enter the critical section and will not attempt to transmit anything until 
another process starts the exit section.


\subsection{Only Global Clock Available}
\label{sec:random:clock}
In the model with global clock, we modify the {\em Increase\_From\_Square} 
algorithm~\cite{JS}, which solves the wake-up problem.  The
purpose of our modification is to assure the stopping property. This is 
a nontrivial task in a scenario without collision detection and this property was not 
present in the original wake-up algorithm.  Intuitively, after one process
successfully transmits, it should enter the critical section.  However, first
of all it might not be aware that it succeeded. Second, between a successful
transmission and entering the critical section, some other processes may start
their entry sections. The details will be presented in the full version of this paper.

\begin{theorem}
\label{thm:global-clock}
There is an $\ep$-mutual-exclusion algorithm, using a modified algorithm 
{\em Increase\_From\_Square} as a subroutine for the entry section, 
with makespan $O(\log n \cdot \log (1/\ep))$ in the model without global clock.
\end{theorem}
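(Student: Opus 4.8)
The plan is to run a modified \textsc{Increase\_From\_Square} in the entry section, driven entirely by each process's \emph{local} clock --- the number of rounds since it last entered its entry section --- so that the global round number is never consulted. The single property I would import from~\cite{JS} is the wake-up guarantee that makes \textsc{Increase\_From\_Square} work for \emph{unsynchronized} processes: for any set of at most $n$ processes with arbitrary, adversarially chosen start times, within $O(\log n\cdot\log(1/\ep))$ rounds after the first of them starts there is, with probability at least $1-\ep$, a round in which exactly one process transmits. That this holds independently of the relative offsets between the processes is precisely the feature that takes the place of a global clock, and it is what lets the theorem be stated without one; I would quote this bound, restated for the $(1-\ep)$-confidence regime, as a black box.

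First I would settle the window bookkeeping. A makespan window --- an interval in which some process is in the entry section while the critical section is empty --- opens either at the first activation or when a process vacates the critical section. Since the occupant of the critical section transmits a labelled critical message every round while the resigned processes stay silent, each listening process reads those messages cleanly and can detect the transition to their absence (requiring the absence to persist for a constant number of rounds rules out transient collisions caused by brand-new arrivals); upon detecting it, a process resets its local clock and restarts \textsc{Increase\_From\_Square}. Each window thus begins a fresh wake-up instance whose first activation coincides with the start of the window, so the bound above applies per window and produces a sole transmission within $O(\log n\cdot\log(1/\ep))$ rounds with probability at least $1-\ep$. The sole transmitter $p$ embeds in its message both its identity and the local round at which it will move into the critical section; every other active process hears this, resigns, and learns when the window will close.

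The step I expect to be the genuine obstacle is the \emph{stopping} (confirmation) subroutine, which is nontrivial exactly because there is no collision detection: a process that has just transmitted cannot tell, by listening, whether it was the lone transmitter or part of a collision, so it cannot be certain it has won. Every \emph{other} active process does hear a lone transmission and learns the winner, so I would have each of them resign immediately; after a genuine sole transmission no resigned process can later enter the critical section. The only remaining route to a violation is for two processes to transmit in the same round, neither detect the other, and both commit to entering. I would suppress this with a short confirmation handshake, run by a candidate after its transmission against the silent background maintained by the resigned processes, and tune its length so that the probability that two distinct candidates both pass it and commit is at most $\ep$. This is the one place the $\ep$ slack is spent, which is consistent with exact exclusion being impossible in this setting. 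Late arrivals are harmless: a process that starts its entry section while the critical section is occupied listens in all but a vanishing fraction of its first rounds, reads a critical message within a constant number of rounds, and resigns before it can endanger exclusion.

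Putting the pieces together yields the two guarantees required by the definition of $\ep$-mutual-exclusion. Restricted to admissible executions, every window is closed by some process entering the critical section within $O(\log n\cdot\log(1/\ep))$ rounds of its start, which bounds the makespan; and the handshake ensures that for each stay of a process in the critical section the probability that a second process is simultaneously inside is at most $\ep$. Crucially, every per-process schedule is indexed by local time and every detection relies only on the presence or absence of a critical message on the channel, so no step of the construction reads a global clock --- which is exactly what the statement requires.
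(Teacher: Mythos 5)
Your construction rests on a premise that is both a misreading of the model and a misattribution of the key lemma. Although the theorem as printed says ``without global clock,'' this is an internal inconsistency of the paper itself: Section~\ref{sec:random:clock} is titled ``Only Global Clock Available,'' its text begins ``In the model with global clock, we modify the \emph{Increase\_From\_Square} algorithm,'' and the results overview places this theorem in the setting where \emph{only} the global clock is available (no collision detection, no knowledge of $n$). \emph{Increase\_From\_Square} is precisely the algorithm of~\cite{JS} for the global-clock model: every awake process computes its transmission probability from the \emph{common global round number}, and its analysis depends on that synchronization. The black box you import --- a wake-up guarantee of $O(\log n \cdot \log(1/\ep))$ holding for arbitrary adversarial offsets when every process indexes its schedule by \emph{local} time, with $n$ unknown --- is not a property of \emph{Increase\_From\_Square} and is not proved in~\cite{JS}. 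The offset-robust, local-clock algorithm of~\cite{JS} is \emph{Probability\_Increase}, and it requires knowledge of $n$; that is exactly Lemma~\ref{lem:JS-known-n}, which this paper uses for the KN model in Theorem~\ref{thm:mod-prob-increase}. So either your processes know $n$, in which case you have essentially re-derived Theorem~\ref{thm:mod-prob-increase} rather than the present theorem, or they know neither $n$ nor the global time, in which case the wake-up bound you rely on is unavailable --- that no-GC/no-KN/no-CD regime is the one the paper's entire case analysis (GC-only, KN-only, CD-only) is designed to avoid. A proof of this theorem must actually use the global clock; the deferred difficulty is how to graft a stopping mechanism onto the global-time-indexed schedule.

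There is also an independent gap in your stopping argument, which is the part the paper flags as the real contribution (and defers to the full version, so there is no complete proof here to compare against). You reduce the bad event to ``two candidates transmit in the same round and both commit,'' and you run a confirmation handshake ``against the silent background maintained by the resigned processes.'' But resignation is triggered only by hearing a \emph{lone} transmission. In the very case the handshake must police --- two or more candidates colliding --- the other active processes hear noise, which without collision detection is indistinguishable from silence; they therefore do \emph{not} resign and keep transmitting according to their wake-up schedules. The silent-background premise thus holds exactly when the handshake is unnecessary (a lone transmitter) and fails exactly when it is needed, so the $\ep$ error bound does not follow. Moreover, since a transmitter can never tell whether it was alone, \emph{every} transmission must be followed by such a handshake, and the rounds consumed by, and interference generated by, these repeated handshakes must be charged to the makespan; your $O(\log n \cdot \log(1/\ep))$ accounting omits this. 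Your window bookkeeping and listen-first treatment of late arrivals do mirror the techniques the paper uses in Section~\ref{sec:random:kn}, but they cannot compensate for the missing wake-up guarantee and the flawed handshake analysis.
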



\subsection{Only Number of Processes Known}
\label{sec:random:kn}
In this scenario, we build our solution based on the \emph{Probability\_Increase} 
algorithm of~\cite{JS}.  In this algorithm, each process works in
$\Theta(\log n)$ phases, each lasting $\Theta(\log (1/\ep))$ rounds. In each
round of phase $i$, a process transmits with probability $2^{-i}$. 

\begin{lemma}[\cite{JS}]
\label{lem:JS-known-n}
If all processes use the algorithm {\em Probability\_Increase}
after being awaken, then there is a~successful transmission 
in time $k = O(\log n \cdot \log (1/\ep))$ with probability at least $1-\ep$.
\end{lemma}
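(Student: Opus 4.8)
The plan is to show that among the $\Theta(\log n)$ phases there is always one whose transmission probability is tuned to the unknown number of competing processes, and that this single well-tuned phase is long enough to force a successful transmission except with probability $\ep$. First I would fix the set of awakened processes and let $m$ denote its size, so that $1 \le m \le n$. The point of sweeping the transmission probability through the values $2^{-1}, 2^{-2}, \dots, 2^{-\Theta(\log n)}$ is that, whatever $m$ is, one of these values matches $1/m$ up to a factor of two: taking $i^\star = \lceil \log_2 m \rceil$ gives $2^{-i^\star} \in (1/(2m),\, 1/m]$, and since $m \le n$ this index lies among the $\Theta(\log n)$ phases, so phase $i^\star$ is actually executed within the claimed time bound.

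The heart of the argument is a single-round bound. In any round of phase $i^\star$, writing $p = 2^{-i^\star}$, the probability that exactly one of the $m$ processes transmits equals $m\,p\,(1-p)^{m-1}$. From $p \in (1/(2m),\, 1/m]$ I would bound $m p > 1/2$ and $(1-p)^{m-1} \ge (1-1/m)^{m-1} \ge 1/e$, so this product exceeds a fixed constant $c := 1/(2e) > 0$. Thus in each round of phase $i^\star$ a successful transmission occurs with probability at least $c$, using fresh and independent coin tosses each round.

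Next I would amplify over the length of the phase. Condition on the event that no successful transmission has occurred before phase $i^\star$; under this conditioning all $m$ processes are still active, since a process stops competing only after a success, so the per-round success probability is at least $c$ in every one of the $L = \Theta(\log(1/\ep))$ rounds of the phase, and the per-round failures compound multiplicatively. Hence the probability that no round of phase $i^\star$ is successful is at most $(1-c)^{L} \le e^{-cL}$, which is at most $\ep$ once the hidden constant in $L = \Theta(\log(1/\ep))$ is chosen to exceed $1/c$. Finally, the time elapsed through the end of phase $i^\star$ is $\sum_{i=1}^{i^\star} L = i^\star L \le \lceil \log_2 n\rceil \cdot \Theta(\log(1/\ep)) = O(\log n \cdot \log(1/\ep))$, giving the claimed value of $k$.

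The step I expect to be the main obstacle is justifying that the number of competing processes really is a fixed $m$ throughout phase $i^\star$, in particular handling processes that may be awakened in different rounds rather than simultaneously; in that case the population is both unknown and time-varying, so one cannot simply point to a single well-tuned phase. In the synchronous-start case the conditioning argument above settles it cleanly, but in the general wake-up scenario I would need to argue that staggered arrivals leave some phase well-tuned to the current population across the whole time window, which is exactly the timing property engineered by Probability\_Increase in~\cite{JS}.
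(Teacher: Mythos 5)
The paper does not prove this lemma at all\,---\,it is imported verbatim from \cite{JS}\,---\,so there is no in-paper argument to compare against. Your proof is the standard one and is correct for the case in which all $m$ participants start \emph{Probability\_Increase} in the same round: phase $i^\star=\lceil\log_2 m\rceil$ gives per-round success probability $mp(1-p)^{m-1}>1/(2e)$, the rounds use independent coins so conditioning on earlier failure is harmless, $\Theta(\log(1/\ep))$ rounds of that phase drive the failure probability below $\ep$, and the elapsed time $i^\star\cdot\Theta(\log(1/\ep))$ is within the stated bound. (A minor indexing quibble: for $m=1$ your $i^\star=0$ falls outside the phase range $i\ge 1$, but phase $1$ with $p=1/2$ already gives constant per-round success there.)

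The one substantive issue is the one you flag yourself. The lemma as stated (``after being awaken'') and as it is actually invoked in the proof of Theorem~\ref{thm:mod-prob-increase} applies to processes that begin the subroutine in \emph{different} rounds: the members of $P$ enter their entry sections anywhere in a window of length $k$, hence start \emph{Probability\_Increase} at staggered times. With staggered starts, distinct processes sit in different phases of the schedule simultaneously, so there is no single round in which all active processes transmit with the common probability $2^{-i^\star}$; the clean $mp(1-p)^{m-1}$ computation must be replaced by a bound on $\bigl(\sum_j p_j\bigr)\prod_j(1-p_j)$ for heterogeneous $p_j$, together with an argument that the decreasing schedule keeps $\sum_j p_j$ inside a constant window for $\Omega(\log(1/\ep))$ consecutive rounds. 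That timing argument is the actual content of \cite{JS}. Since the lemma is cited rather than proved here, your synchronous-case argument plus an explicit pointer to \cite{JS} is a reasonable account, but be aware that the synchronous case alone does not cover the way the lemma is used in this paper.
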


We describe how to modify the \emph{Probability\_Increase} algorithm to meet
the requirements of $\ep$-exclusion. 
When a process enters the entry section, it first  switches to
the listening mode and stays 
in this mode for $k= O(\log n \cdot \log (1/\ep))$ rounds. 
If within this time it hears another process, it resigns.
Afterwards, the process starts to execute the {\em Probability\_Increase} algorithm. 
Whenever it is not transmitting, it listens, and when it hears a message from another
process, it resigns. 
After executing $k$ rounds of the listening mode and the following
$k$ rounds of {\em Probability\_Increase} without resigning, 
the process enters the critical section.
Using this algorithm, the following result can be proved.

\begin{theorem}
\label{thm:mod-prob-increase}
There is an $\ep$-mutual-exclusion algorithm, using a modified algorithm 
{\em Probability\_Increase} as a subroutine for the entry section, 
with makespan $O(\log n \cdot \log (1/\ep))$ in the KN model.
\end{theorem}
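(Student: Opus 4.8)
The plan is to establish the two guarantees separately: a makespan bound of $O(\log n\cdot\log(1/\ep))=O(k)$, and the $\ep$-exclusion property (no-deadlock then follows, since a bounded makespan implies no-deadlock, as noted earlier). Two features of the model must be kept in mind throughout. First, in the no-CD setting a process resigns only when it hears a \emph{clean} single transmission, because collisions are indistinguishable from silence. Second, a process transmits only inside its $k$-round Probability\_Increase window; during its preceding $k$-round listening window, and whenever it is not transmitting during Probability\_Increase, it listens.

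For the makespan I would prove a \emph{deterministic} bound. Fix a maximal interval in which some process is in the entry section and none is in the critical section, and let $a$ be its first round, with $p_0$ a process in the entry section at $a$. Within $2k$ rounds $p_0$ concludes its entry section: either it enters the critical section (ending the interval), or it resigns at some round $r\le a+2k$ upon hearing a successful transmission by a process $w$. Since $w$ transmits only inside its Probability\_Increase window, it has at most $k$ rounds of that window remaining, and I claim it enters the critical section undisturbed by round $r+k$. This is exactly where the listening window matters: at round $r$ every other active process is listening (as $w$ transmits alone) and therefore resigns, while every process that starts its entry section after $r$ stays silent in its listening window for the next $k$ rounds \emdash long enough for $w$ to finish. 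Hence $w$ hears only silence, never resigns, and some process enters the critical section by round $a+3k$, bounding the interval by $3k=O(\log n\cdot\log(1/\ep))$.

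For $\ep$-exclusion, suppose toward a violation that a process $q$ joins $p$ in the critical section, $q$ entering at round $t_q$ while $p$, having entered at some $t_p\le t_q$, is still in the critical section at $t_q$. The crux is the window $W=[t_q-k,t_q]$, throughout which $q$ runs the full Probability\_Increase schedule. I would first observe that $p$ is \emph{active on the channel} throughout $W$: for rounds $\ge t_p$ it is in the critical section, transmitting every round, while for rounds in $[t_q-k,t_p)$ \emdash which lie in $p$'s own Probability\_Increase window $[t_p-k,t_p)$ because $t_q\ge t_p$ \emdash it is running Probability\_Increase. Applying Lemma~\ref{lem:JS-known-n} to the processes active during the $k$-round window $W$ then gives, with probability at least $1-\ep$, a round of $W$ with a single transmitter. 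Any such success yields a contradiction: if the transmitter is not $q$, then $q$ is necessarily listening in that round (otherwise it would be a collision) and resigns, contradicting $q$'s entry; and if the transmitter is $q$, then $p$ is running Probability\_Increase and listening in that round (it cannot be in the critical section, else $q$ would not be alone), so $p$ resigns and never reaches the critical section, again a contradiction. Thus the probability that a second process joins $p$ is at most $\ep$.

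The main obstacle I anticipate is the precise form of Lemma~\ref{lem:JS-known-n} required by the previous paragraph, which is where the absence of a global clock bites. The processes active during $W$ start their Probability\_Increase windows at arbitrarily offset rounds, so they occupy different phases in the same round, and moreover $p$ may spend part of $W$ in the critical section, transmitting in \emph{every} round rather than following the schedule. I would therefore establish a mild extension of Lemma~\ref{lem:JS-known-n}: over a window in which one process ($q$) sweeps all $\Theta(\log n)$ phases, a sole-transmitter round occurs with probability $1-\ep$ even when the other participants are phase-offset or pinned at transmission probability one. The listening window is the complementary ingredient: it guarantees that the success cannot instead have been ``used up'' just before $W$, since any success during $q$'s listening phase $[t_q-2k,t_q-k]$ would already have made $q$ resign; and, as in the makespan argument, that a late-arriving process cannot disturb a process that is on its way into the critical section. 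The remaining checks \emdash that resignations only ever reduce contention and never create new obstructions \emdash are routine.
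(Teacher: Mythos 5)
Your makespan argument is correct and takes a genuinely different route from the paper's: you get a \emph{deterministic} $3k$ bound by following the sole transmitter $w$ whose message caused $p_0$ to resign and showing that $w$ can no longer be disturbed (everyone active at round $r$ has resigned, and later arrivals sit silently in their listening windows). The paper instead fixes a round $t$ at which entry sections begin after a round with no process in any non-remainder section, lets $P$ be the processes in their entry sections at round $t+k$, observes that later arrivals are silent throughout $[t,t+2k]$, and applies Lemma~\ref{lem:JS-known-n} \emph{verbatim} to $P$: with probability $1-\ep$ some $p\in P$ transmits alone at a round $r\in[t+k,t+2k)$, whereupon all other active processes resign, and $p$ finishes \emph{Probability\_Increase} undisturbed and enters the critical section before $t+2k$. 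That single event delivers the makespan bound and the single-winner property simultaneously, so the lemma is only ever invoked in the exact wake-up configuration it was proved for.

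The genuine gap is in your $\ep$-exclusion argument, and you have located it yourself: the ``mild extension'' of Lemma~\ref{lem:JS-known-n} is doing all the work and is neither proved nor obviously true. Inside your window $W=[t_q-k,t_q]$ the participants are not in the wake-up configuration of the lemma: $p$ may be pinned at transmission probability $1$ (it is in the critical section), and several other processes may occupy low phases of \emph{Probability\_Increase} near the end of $W$ (a process entering its entry section shortly before $t_q-k$ only reaches phase $1$ in the last rounds of $W$ and never advances further within it). With one process transmitting every round and $m$ others each transmitting with probability $1/2$, the per-round probability of a sole transmitter is of order $2^{-m}$, and such configurations are reachable precisely through the jamming cascade your argument must rule out: in the no-CD model a jammed listening window sounds like silence, so it never triggers a resignation. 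Hence the extension cannot be waved through as routine; it needs a real argument tracking how contention decays inside $W$ (and a careful treatment of the conditioning on the random event ``$q$ enters at $t_q$''), or else the exclusion guarantee should be extracted, as in the paper, from the same fresh-start successful-transmission event used for the makespan.
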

\begin{proof}
%
Let $k$ be as defined above in the algorithm definition.  
Let $t$ be a round in the execution in which there is at least one process in the entry section, no process in the exit or critical section,
and such that there was no process in the entry section in the previous round $t-1$.
Let $P$ be the set of processes which are in their entry sections at round $t+k$.  
First, we observe that processes which enter their entry section in round $t+k+1$ or later, 
i.e., all processes that are not in set $P$,
do not transmit  in the time period $[t,t+2k]$. By Lemma~\ref{lem:JS-known-n}, with
probability $1-\ep$, there is a process in $P$ which successfully transmits
at some round in $[t+k,t+2k)$.
Let $t+k\le r < t+2k$ be the first such round, and $p\in P$ be the process transmitting
successfully in round $r$.
Note that all other processes being in the entry section 
resign at this round, and all processes that start their entry sections after round $r$
do not transmit by round $r+k$.
Therefore, $p$ does not hear anything before it finishes its {\em Probability\_Increase} 
subroutine (in the next at most $k-1$ rounds after $r$), which implies that it enters the 
critical section by round $r+k-1<t+2k$.
\end{proof}


\subsection{Only Collision Detection Available}
\label{sec:random:collision}


In this scenario, the main idea behind our algorithm is as follows. 
First, we show how to solve a {\em static case} of the $\ep$-mutual-exclusion problem, 
i.e., the case where there is a subset~$S$ of processes which start their entry sections at round $1$
and no process is activated later. Later, we show that we are then able to solve 
$\ep$-mutual-exclusion problem in (asymptotically) the same time.
In what follows, we assume that whenever a process does not transmit, it listens. 

To solve the static case, we first run a 
simple {\em Check\_If\_Single} subroutine, which, with probability 
at least $1-\ep$, determines whether there is one active processes or more. In the former case, 
this process may simply enter the critical section.
In the latter, we simulate Willard's algorithm~\cite{Wil}, which works in expected time 
$\log \log n + o(\log \log n)$. The simulation is required, as the original algorithm of~\cite{Wil}
assumes that each process can simultaneously transmit and listen in each round.
The idea of this simulation is that for each message sent, all listening processes 
acknowledge it in the next round. 

\begin{lemma}
\label{lem:ethernet-simulation}
If there are at least two active processes, it is possible to simulate one
round taken in the model in which a process may simultaneously transmit and
listen, in two rounds of our model in the setting with collision detection.
\end{lemma}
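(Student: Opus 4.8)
The plan is to use the first of the two rounds to reproduce the channel feedback that a \emph{listening} process would obtain in the simultaneous transmit/listen (``Ethernet'') model, and to use the second round as an acknowledgement phase that supplies the same feedback to the \emph{transmitting} processes, which in our model hear nothing while they transmit. Fix the Ethernet round to be simulated, let $S$ be the set of active processes (with $|S|\ge 2$), and let $T\subseteq S$ be the processes that decide to transmit in that round; the remaining processes $L=S\setminus T$ listen.

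In the first round of the simulation I would let exactly the processes of $T$ transmit and the processes of $L$ listen. A listener then observes silence if $|T|=0$, a single message if $|T|=1$, and a collision if $|T|\ge 2$; by collision detection these three feedbacks are distinguishable, and they coincide exactly with what a listener perceives in the simulated Ethernet round. Thus every process in $L$ already knows the correct outcome after round one.

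The second round is devoted to informing the transmitters. The rule is: a process of $L$ transmits in round two \emph{iff} it saw a single successful message in round one (i.e.\ iff $|T|=1$), while every process of $T$ listens. I would then check the two relevant cases. If $|T|=1$, all of $L$ retransmit; since $|L|=|S|-1\ge 1$, the sole transmitter hears non-silence (a message if $|L|=1$, a collision if $|L|\ge 2$), which by collision detection it recognises as an acknowledgement, concluding that its round-one transmission was successful. If $|T|\ge 2$, no process of $L$ transmits in round two, so every listening member of $T$ hears silence and concludes that a collision occurred. In both cases each transmitter reconstructs precisely the feedback it would have received in the Ethernet model, and the case $|T|=0$ is vacuous because there is no transmitter to inform.

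The crux\emdash and the only place the hypothesis $|S|\ge 2$ is used\emdash is the case $|T|=1$: the lone transmitter cannot hear the channel in round one, so it depends on an acknowledgement in round two, and that acknowledgement exists only if the listener set $L$ is nonempty. With $|S|\ge 2$ we have $|L|=|S|-1\ge 1$, guaranteeing it. Were a single process active and transmitting, round two would be silent and that process would wrongly infer a collision; this is exactly the degenerate situation the hypothesis excludes. I expect the only delicate point in writing the argument cleanly is making explicit that collision detection is invoked twice\emdash in round one to separate a single message from a collision, and in round two to separate an acknowledgement (which may itself be a collision) from silence.
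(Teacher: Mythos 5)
Your proposal is correct and follows exactly the approach the paper intends: the paper only sketches the idea ("for each message sent, all listening processes acknowledge it in the next round") and defers details to the full version, and your two-round transmit-then-acknowledge scheme, including the observation that the hypothesis of at least two active processes is needed precisely so that a lone successful transmitter has someone to acknowledge it, is a faithful and complete elaboration of that sketch.
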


As mentioned above, another building block is a procedure {\em Check\_If\_Single}.  The algorithm assumes that there is a set of processes which start
this procedure simultaneously.  The procedure consists of $2 \cdot \log
(1/\ep)$ rounds.  In each odd round, process $i$ tosses a symmetric coin,
i.e., with probability $1/2$ of success,
to choose whether it transmits in the current round and listens in the next round,
or vice versa.
If the process never hears anything, it enters the critical section at the end of the procedure.

\begin{lemma}
\label{lem:check-if-single}
Assume $k$ processes execute the procedure {\em Check\_If\_Single}. 
If $k = 1$, then the only process enters the critical section.
If $k \geq 2$, then with probability $1-\ep$, no process enters the critical section.
\end{lemma}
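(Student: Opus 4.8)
The plan is to first dispose of the case $k=1$ and then reduce the case $k \geq 2$ to a single clean event whose probability is easy to control. For $k=1$, observe that in every one of the $\log(1/\ep)$ pairs of rounds the lone process listens in exactly one of the two rounds, and since no one else ever transmits it hears only silence; hence it never hears anything and enters the critical section. This holds deterministically, giving the first assertion.

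For $k \geq 2$, I would frame the goal as bounding the probability of the event that \emph{some} process hears nothing throughout the procedure, since a process enters exactly when it hears nothing. The key step is to characterize this event precisely. Index the pairs by $1,\dots,\log(1/\ep)$, and in each pair say a process has \emph{pattern A} if it transmits then listens and \emph{pattern B} if it listens then transmits, each chosen with probability $1/2$, independently across processes and pairs. The crucial claim is: if two processes choose opposite patterns in some pair, then each hears the other in that pair. This is where collision detection is essential, as a listening process must register a signal even when a third process transmits simultaneously and causes a collision; without CD such a collision would be indistinguishable from silence. From this claim, if a process $p$ hears nothing in all pairs then for every other process $q$ and every pair $p$ and $q$ chose the \emph{same} pattern, so ranging over all $q$ shows that all $k$ processes agree on a common pattern in every pair. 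The converse is immediate: if all processes pick the same pattern in a pair, the channel is silent in the round they all listen and nobody listens in the round they all transmit, so no one hears anything. Thus the event ``some process enters'' coincides exactly with the event that all $k$ processes make identical choices in every one of the $\log(1/\ep)$ pairs.

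It then remains to compute the probability of this all-agree event. In a fixed pair, the probability that all $k$ processes choose the same pattern is $2\cdot 2^{-k}$, which is at most $1/2$ precisely because $k \geq 2$. Since the $\log(1/\ep)$ pairs are independent, the all-agree event has probability at most $(1/2)^{\log(1/\ep)} = \ep$, taking logarithms base $2$. Hence with probability at least $1-\ep$ no process enters, as required.

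The main obstacle is the exact characterization in the second paragraph: recognizing that the apparently complicated event ``some process survives'' collapses to the single symmetric event ``everyone always agrees'' avoids a lossy union bound over processes and rounds, and it is exactly what makes the sharp bound $(1/2)^{\log(1/\ep)}$ attainable. Care is needed in the pairwise claim to invoke collision detection so that a listening process registers a signal even under a collision, and to verify that a disagreement in even a single pair already forces both processes to hear something.
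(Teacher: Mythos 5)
Your proposal is correct and follows essentially the same route as the paper's proof: both reduce the per-pair analysis to the observation that a process can avoid hearing anything only if all $k$ processes transmit in the same round of the pair (probability $2\cdot 2^{-k}\le 1/2$, using collision detection), and then multiply over the $\log(1/\ep)$ independent pairs to get the bound $\ep$. Your explicit two-way characterization of the survival event as the all-agree event is a slightly more detailed rendering of the same argument, not a different one.
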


\begin{proof}
The first claim holds trivially. For showing the second one, we fix 
an odd-even pair of rounds. Let $E$ denote the event
that there is a process, which does not hear anything in this pair of rounds.
For this to happen all processes running {\em Check\_If\_Single} 
have to transmit in the odd round or all have
to transmit in the even round. Thus, $\Pr[E] = 2 \cdot 1/2^k =1/2^{k-1} \leq 1/2$.
Since the transmissions in different pairs of rounds are
independent, the probability that there exists a process which does not hear
anything during the whole algorithm, and thus enters the critical section, 
is at most $(1/2)^{\log (1/\ep)} = \ep$. 
\end{proof}

We may now describe an algorithm solving the static $\ep$-mutual-exclusion
problem.  Let $S$ be a subset of processes which simultaneously start their
entry sections.  In the first $2\log (1/\ep)$ rounds, the processes execute the
procedure {\em Check\_If\_Single}.  Then the processes that did not enter the
critical section, run a simulation of Willard's algorithm, 
as described in Lemma~\ref{lem:ethernet-simulation}.
The processes that transmit successfully, enter the critical section.
Using this algorithm, the following result can be proved.

\begin{theorem}
\label{thm:collision-detection}
In the scenario with collision detection, 
there is an algorithm solving
the static $\ep$-mutual-exclusion problem with expected makespan $O(\log \log n + \log (1/\ep))$.
\end{theorem}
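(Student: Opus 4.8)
The plan is to analyze the expected makespan of the static algorithm that concatenates \emph{Check\_If\_Single} with the simulation of Willard's algorithm, distinguishing the two cases by the size of the set $S$ of simultaneously-activated processes. If $|S|=1$, the single process hears nothing throughout \emph{Check\_If\_Single} and enters the critical section after exactly $2\log(1/\ep)$ rounds by the first claim of Lemma~\ref{lem:check-if-single}; this gives an $O(\log(1/\ep))$ contribution. If $|S|\ge 2$, then by the second claim of Lemma~\ref{lem:check-if-single}, with probability at least $1-\ep$ no process enters the critical section during \emph{Check\_If\_Single}, so the processes proceed to the simulation phase. Here I would invoke the stated running time of Willard's algorithm, namely expected time $\log\log n + o(\log\log n)$ to produce a successful transmission, together with Lemma~\ref{lem:ethernet-simulation}, which lets us simulate each round of Willard's algorithm using two rounds of our collision-detection model. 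The factor of two preserves the asymptotic bound, so the simulation phase contributes $O(\log\log n)$ rounds in expectation.

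Combining the two phases, I would argue that from the moment of activation until some process transmits successfully and thus enters the critical section, the expected number of rounds is $O(\log(1/\ep)) + O(\log\log n) = O(\log\log n + \log(1/\ep))$. Since makespan measures the maximum length of an interval in which some process is in the entry section while none is in the critical section, and in the static case all relevant processes are activated at round $1$, this directly bounds the expected makespan.

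The main subtlety, and the step I expect to require the most care, is that makespan is defined only over \emph{admissible} executions (those in which mutual exclusion is never violated), so I must ensure the expectation is conditioned correctly. The non-admissible executions arise precisely in the $1-\ep$-probability failure event where two processes might both pass \emph{Check\_If\_Single} and enter the critical section, or where the simulation produces an erroneous collision outcome; these are neglected by definition in the makespan computation. I would therefore argue that conditioned on admissibility, the phases still terminate within the claimed expected bound, using the fact that the failure probability $\ep$ is small and that Willard's algorithm, once it reaches a single surviving transmitter, yields a successful transmission whose detection via Lemma~\ref{lem:ethernet-simulation} lets exactly one process enter the critical section. The second potential obstacle is handling the tail of Willard's expected running time so that bad events do not inflate the expectation; this is controlled because the expected time bound already accounts for the full distribution, and the two-round simulation overhead is a fixed multiplicative constant.
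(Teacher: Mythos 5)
Your proposal is correct and takes essentially the same route as the paper's proof: a case analysis on whether one or more processes are active, Lemma~\ref{lem:check-if-single} to handle the \emph{Check\_If\_Single} phase, and the expected $O(\log\log n)$ bound for Willard's algorithm combined with the two-round simulation of Lemma~\ref{lem:ethernet-simulation}. Your additional discussion of conditioning on admissible executions is more explicit than the paper's (very terse) argument but does not alter the approach.
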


\begin{proof}
Consider the algorithm described above, based on the
procedure {\em Check\_If\_Single}.
If there is only one process starting its entry section, it enters the critical
section right after the procedure {\em Check\_If\_Single} 
(which takes $O(\log (1/\ep))$ rounds).
If there is more than
one process, with probability $1-\ep$ they do not enter the critical section
after this procedure and they all simultaneously start the simulation of
Willard's algorithm. 
By the property of Willard's algorithm~\cite{Wil} and by Lemma~\ref{lem:ethernet-simulation}, 
in expectation there is a successful transmission in $O(\log \log n)$ rounds. 
\end{proof}


It remains to show that we may use an algorithm for static version of
$\ep$-mutual-exclusion to solve 
the general version of the $\ep$-mutual-exclusion problem. 
The idea is to synchronize processes at the beginning, and then
to transmit a ``busy'' signal in every second round. New processes starting
their entry section note this signal and will not compete for the critical
section, until an exit section releases the shared channel.

\begin{theorem}
\label{thm:t-static-reduction}
If there exists an algorithm $\ALG$ for the static $\ep$-mutual-exclusion problem
with (expected) makespan $T$ in the model with collision detection, then there exists an
algorithm $\ALG'$ for the $\ep$-mutual-exclusion problem with (expected) makespan
$2 + 2 \cdot T$ in the same setting. 
\end{theorem}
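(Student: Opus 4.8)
The plan is to wrap the static algorithm $\ALG$ in a thin shell that (i) forces every group of processes that contend for a free channel to start a common instance of $\ALG$ in lock-step, and (ii) keeps every later arrival out of the way until the channel is released. The two ingredients are a \emph{detection rule} and a \emph{busy signal}. On entering the entry section a process only listens, keeping a count of consecutive silent rounds that is reset to zero on any non-silence; as soon as this count reaches two the process declares itself a \emph{participant}. A participant henceforth runs $\ALG$ under a fixed schedule that expands every round of $\ALG$ into two rounds of $\ALG'$: a busy-signal round, in which every participant transmits, followed by a round that executes the prescribed action of $\ALG$ (transmit or listen, exactly as $\ALG$ dictates). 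The participant that $\ALG$ designates as the winner then enters the critical section, where the mandatory critical messages keep the channel occupied every round. A process whose silence count is reset before reaching two simply keeps listening, so a resigned process automatically rejoins the next contention once it again sees two consecutive silences.

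The heart of the argument is an invariant maintained by induction over rounds: at the start of every round the system is in one of two states{\emdash}either the channel is free (no participant and nobody in the critical section), or exactly one batch is active and transmits in at least every second round (busy signals during the contention, critical messages during the critical section). Suppose the channel is free and let $t$ be the first following round in which some process is in the entry section. Every process that is in the entry section at round $t$ listens during rounds $t$ and $t+1$, sees silence in both (all of them are still merely listening), and therefore becomes a participant in round $t+2$, transmitting the busy signal then; thus the whole batch starts a common instance of $\ALG$ synchronously. Any process whose entry section begins in round $t+1$ or later cannot accumulate two consecutive silent rounds while a batch is active, because busy-signal rounds recur in every second round of the contention and critical messages occupy every round of a critical section; such a process therefore keeps listening and never joins the running instance. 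This preserves the invariant and shows that at most one instance of $\ALG$ ever runs, with all of its participants perfectly synchronized.

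Correctness of $\ep$-exclusion then reduces to that of $\ALG$. The busy-signal rounds are not part of $\ALG$'s logic, and in the algorithm rounds collision detection lets the participants read exactly the silence / single-transmission / collision feedback they would read in the static model; hence the batch faithfully simulates a static execution of $\ALG$ on the set $S$ equal to the batch. Two participants therefore land in the critical section together only if $\ALG$ itself errs, which happens with probability at most $\ep$, whereas every non-participant resigns \emph{deterministically} on hearing the busy or critical signals and never enters concurrently. Neglecting the non-admissible executions, as the definition of makespan prescribes, $\ALG'$ inherits the $\ep$-exclusion guarantee.

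For the makespan, take any maximal interval in which some process is in the entry section while none is in the critical section and let $t$ be its first round, so that the channel is free at $t$ (after an idle period or just after a critical section was released). The two detection rounds account for the additive term, after which the synchronized batch runs $\ALG$ with each of its at most $T$ rounds expanded into two, so the winner reaches the critical section within $2 + 2T$ rounds of $t$; passing to expectations transfers the bound when $T$ is only an expected makespan. The step I expect to be the real obstacle is making the synchronization watertight without a global clock: I must argue that the ``two consecutive silences'' rule batches together \emph{exactly} the processes present in the first contended round and that no second batch can ever form beside an active one. This is exactly what the busy-signal-in-every-second-round structure buys{\emdash}it prevents two consecutive silent rounds from ever occurring while a batch is active{\emdash}and the inductive invariant above is where this has to be verified with care, including the boundary rounds in which a batch forms and in which the critical section is released.
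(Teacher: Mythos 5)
Your proposal is correct and follows essentially the same approach the paper sketches for this theorem (the paper defers the full proof but states the idea: synchronize contenders at the start, transmit a busy signal in every second round, and have later arrivals detect that signal and stay out until the channel is released). Your two-silent-rounds synchronization rule, the doubling of each round of $\ALG$ into a busy-signal round plus an action round, and the resulting $2+2T$ accounting are exactly that construction, carefully elaborated.
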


\section{Fairness}
\label{sec:lockout}

The algorithms shown in~\cite{CGKP} and Section~\ref{sec:random}
do not consider the no-lockout property,
i.e., it may happen that a process never gets out of its entry section, as
other processes exchange access to the critical section among themselves. 
We show how to modify algorithms satisfying no-deadlock property
(in particular, the algorithms from~\cite{CGKP}), so that the no-lockout
property is fulfilled.
Moreover, our transformation allows to express the (expected) makespan
of obtained fair protocols in terms of the (expected) makespan of the original weaker protocols.

Each process maintains an additional local counter of losses. When it starts its
entry section, it sets its counter to zero and whenever it loses the
competition for the critical section, 
i.e., when some other process enters the critical section,
it increases this counter by one. 
When a process enters its exit section, it becomes a guard: it helps
processes currently being in the entry section to choose one of them
with the highest loss counter. How high the loss counter can grow is
bounded by the number of processes in their entry sections at the moment when
the considered process entered its current entry section. Thus, also the time after
which the process will enter the critical section is bounded.

\begin{lemma}
\label{lem:lockout-cd}
If either collision detection is available or the number of nodes is known, 
it is possible to transform a mutual
exclusion algorithm with (expected) makespan $T$ into an algorithm, which also guarantees the no-lockout property and has (expected) makespan $O(T+\log n)$.

\end{lemma}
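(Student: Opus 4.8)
The plan is to augment the given makespan-$T$ algorithm $\ALG$ with a per-process \emph{loss counter} and to use each exiting process as a \emph{guard} that, before returning to the remainder section, runs a short priority selection among the processes currently in the entry section. A process sets its counter to $0$ upon starting an entry section, increments it by one every time it hears another process in the critical section (which is detectable, since a critical process transmits a critical message each round), and otherwise keeps running $\ALG$ unchanged. The only structural addition is that whenever a critical section has just been vacated and at least one process is waiting, the vacating guard deterministically elects the waiting process of \emph{highest} loss counter, breaking ties by process identifier, and lets exactly that process enter the critical section; if no process is waiting, or if no guard is available (a cold start), the system simply runs $\ALG$.

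First I would specify the selection subroutine and bound its length. Since, as shown below, no loss counter ever exceeds $n$, every counter fits in $\lceil \log n\rceil$ bits, so the guard can isolate the maximum by a binary search over these bits: at each bit position the still-eligible waiting processes with a one in that position signal, and the guard retains that half iff the signal is nonempty, after which ties on the identifier bits are resolved the same way. In the collision-detection setting the required nonemptiness test is exactly one round of collision detection; in the known-$n$ setting the same test is realised with the guard acting as a reliable coordinator, again in a constant number of rounds per bit. Either way the selection costs $O(\log n)$ rounds, is deterministic, and returns a unique process, so it preserves both the exclusion and the no-deadlock guarantees of $\ALG$.

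Next I would bound the makespan. Any maximal interval in which some process is in the entry section while none is in the critical section is of one of two types. After an exit with a nonempty waiting set, the guard's selection drives a waiting process into the critical section within $O(\log n)$ rounds. At a cold start no guard is available, and the original algorithm $\ALG$ brings some process into the critical section within $T$ rounds, exactly as before. Taking the maximum over both types yields makespan $O(T+\log n)$, and unobstructed exit is preserved since a guard spends only $O(\log n)$ extra rounds in the exit section.

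The core of the argument, and the step I expect to be the main obstacle, is the no-lockout property, which I would derive from the invariant that the loss counter of a process $p$ stays below the number $m$ of processes in entry sections at the moment $p$ started its current entry section. Fix such a $p$ and observe that every winning event while $p$ waits increments $p$'s counter by one while leaving all differences between the counters of continuously waiting processes unchanged. After $p$'s first loss its counter is at least $1$, so any process that either arrives later or re-enters after just winning (hence with counter $0$) has strictly smaller priority than $p$ and cannot be elected before $p$; consequently each of the at most $m-1$ other processes present when $p$ started can win at most once before $p$ does. Hence $p$'s counter never exceeds $m-1\le n-1$, which also justifies the $\lceil\log n\rceil$-bit selection above, and after at most $m-1$ winning events $p$ is the unique process of maximum counter and is therefore elected. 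The delicate points to check are that this invariant survives arbitrary interleavings of arrivals, re-entries and cold starts, and that ties among counter-$0$ processes are broken consistently so that no process is starved within a tie class.
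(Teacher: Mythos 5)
Your high-level plan is the same as the paper's: per-process loss counters reset on entry and incremented on each loss, the exiting process acting as a guard, a deterministic $O(\log n)$-round selection of the waiting process with the maximum counter (ties by identifier), and the invariant that a process's counter is bounded by the number of concurrently waiting processes, which yields both no-lockout and the $O(T+\log n)$ bound. The makespan and no-lockout arguments you sketch are essentially the ones the paper intends (the paper itself only states the counter bound and defers the full analysis).

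However, there is a genuine gap, and it is exactly the point you defer at the end as ``delicate points to check.'' Your construction tacitly assumes that the guard and the currently waiting processes can start the selection subroutine in a common round, that the guard can tell whether the waiting set is nonempty, and that processes whose entry sections begin in the middle of a selection, or in the middle of a run of $\ALG$, neither corrupt the ongoing computation nor get silently dropped. On a multiple access channel where the only shared state is the channel itself, none of this comes for free, and it is the bulk of the paper's actual construction: $\ALG$ is slowed down by a factor of three so that each original round is preceded by a recognizable $1,1$ busy prefix; every block of the selection subroutine is prefixed by the guard transmitting $1$ then $0$ so that late arrivals can detect an ongoing selection, synchronize to a block boundary, and join as competitors; and a freshly arriving process listens for three rounds and branches on the observed pattern ($1,1,1$ versus $1,1,0$ versus anything else) to decide whether to run $\ALG$, to wait, or to join a selection. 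Without some such mechanism your claim that ``the system simply runs $\ALG$'' at a cold start while elsewhere the guard ``elects the waiting process of highest loss counter'' does not compose into a single protocol, because the two modes can overlap in time and interfere on the channel. A second, smaller gap is the KN (no collision detection) branch: your per-bit nonemptiness test is one round of collision detection in the CD model, but without CD a collision is indistinguishable from silence, so ``the guard acting as a reliable coordinator'' needs an explicit implementation (e.g., the guard transmitting jointly with the signalers so that listeners distinguish a lone successful transmission from noise, plus a relay round back to the guard); the paper also does not supply this and defers the KN variant to its full version, but your one-sentence assertion does not close it either.
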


\begin{proof}
Here we only describe a transformation for the CD scenario; 
the analysis and the variant for KN will appear in the full
version of the paper.
%
Let $\ALG$ be a given subroutine for the entry section, satisfying no-deadlock
property. In order to use it for an entry procedure satisfying stronger no-lockout
property, we slow down algorithm $\ALG$ three times, by preceding each
original round by two additional rounds: in the first one the process
transmits signal $1$, while in the second one it only listens.
We call the obtained subroutine~$\ALG'$.

We also need the following selection subroutine. 
Assume there is a single guard and a subset (may be empty) of other processes,
called {\em competing processes}.
They all start the selection subroutine in the same round.
The goal is to elect one of the competing processes to enter the critical section.
The subroutine is partitioned into {\em blocks}, each consisting of three rounds.
In the first two rounds of each block only the guard transmits, and the signals are
$1$ and $0$, respectively. The purpose of these rounds is to assure that 
processes that start their entry sections later will not disturb the selection subroutine.
The competition, which is essentially a
binary search for the highest loss counter of the competing processes,
proceeds in the following phases. In the
$i$th block of the first phase all processes whose loss counter is at least
$2^i$ broadcast a $0$ (after the guard's $10$),
all other processes listen. The phase ends with a block
$i$ when silence is heard, thus all competitors and the guard know that the
highest loss counter is between $2^{i-1}$ and $2^i$. Then a binary search is
performed in additional $O(i)$ blocks in similar manner. Additional binary
search is performed to choose one process (the one with the minimum id) from
all processes with the same maximal number of losses.  

We now describe a procedure governing the exit section.
Recall that a process being in the critical section always broadcasts the critical
message to let others know that the channel is occupied. For the purpose of
this reduction and its analysis, we denote the~critical message by a single bit
$1$ (this is only technical assumption to simplify the proof arguments). 
When the process starts its exit section and becomes a guard, it
transmits a $0$ in the first round and listens in the second round. 
If the guard hears silence then it switches to the remainder section;
otherwise it participates in the selection subroutine described above.

Each process starting its entry section listens for three rounds.
If it hears silence during all these rounds,
it starts executing $\ALG'$ until some process enters the critical section
(it is guaranteed by no-deadlock property of $\ALG$, and can be extended
to $\ALG'$ as well); then it resets its state and starts again
its entry section procedure with round counter~$1$.
It~also resets its state and starts again with round counter~$1$
in case it hears 
anything different from $1,1,1$ and $1,1,0$ 
during the first three rounds of listening.
In the remaining third case, i.e., when the process has heard
$1,1,1$ or $1,1,0$,
it keeps listening until the first round $t$, counting
from the first listening round in this run,
such that the process has heard signals $1,1,0$ in rounds $t-2,t-1,t$, respectively.
It then transmits in round $t+1$ and starts the selection subroutine in round $t+2$.
\end{proof}



By combining Lemma~\ref{lem:lockout-cd}
with the results from Section~\ref{sec:random}
and with the existing no-deadlock deterministic algorithms of~\cite{CGKP}, 
we obtain the following two conclusions.

\begin{corollary}
There exists a randomized algorithm with expected makespan $O(\log n + \log (1/\ep))$
solving the $\ep$-mutual-exclusion problem in the model in which collision
detection is available, and a randomized algorithm with makespan
$O(\log n \cdot \log (1/\ep))$ in the KN model.
\end{corollary}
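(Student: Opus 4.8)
The plan is to derive this corollary purely by composition, feeding the no-deadlock $\ep$-mutual-exclusion algorithms constructed in Section~\ref{sec:random} into the fairness transformation of Lemma~\ref{lem:lockout-cd}. I would treat the two claimed regimes — collision detection and known $n$ — separately, since they rely on different entry-section subroutines, but in both cases the argument reduces to tracking how the makespan and the error probability propagate through the transformation.

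For the collision-detection case, I would first invoke Theorem~\ref{thm:collision-detection} to obtain an algorithm for the \emph{static} $\ep$-mutual-exclusion problem with expected makespan $T_0 = O(\log\log n + \log(1/\ep))$. Applying Theorem~\ref{thm:t-static-reduction} lifts this to the general (dynamic) setting at the cost of an additive constant and a factor of two, so the resulting algorithm still has expected makespan $O(\log\log n + \log(1/\ep))$ while guaranteeing only no-deadlock. I would then apply Lemma~\ref{lem:lockout-cd} with $T = O(\log\log n + \log(1/\ep))$, yielding a no-lockout algorithm of expected makespan $O(T + \log n)$; since $\log n$ dominates $\log\log n$, this simplifies to the claimed $O(\log n + \log(1/\ep))$.

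For the known-$n$ case the chain is shorter: Theorem~\ref{thm:mod-prob-increase} already supplies a dynamic no-deadlock $\ep$-mutual-exclusion algorithm of makespan $T = O(\log n \cdot \log(1/\ep))$, and feeding it into the KN variant of Lemma~\ref{lem:lockout-cd} produces a no-lockout algorithm of makespan $O(T + \log n)$. Because the additive $\log n$ is absorbed by the product $\log n \cdot \log(1/\ep)$ for $\ep$ bounded away from $1$, the bound simplifies to $O(\log n \cdot \log(1/\ep))$, as required.

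The part that needs genuine care — and what I expect to be the only real obstacle — is verifying that the guard/loss-counter machinery of Lemma~\ref{lem:lockout-cd} preserves the $\ep$-exclusion guarantee rather than merely a hard exclusion guarantee. The transformation slows the underlying algorithm down by a constant factor and interleaves it with synchronization (``busy signal'') rounds; I would need to check that no two processes can enter the critical section together except through a failure event already charged to the underlying $O(\ep)$ error probability, so that the composite algorithm remains $\ep$-exclusive after, if necessary, rescaling the internal error parameter by a constant. Since the makespan depends on that parameter only as $\log(1/\ep)$, such a constant rescaling affects only the hidden constant and leaves the stated asymptotics intact.
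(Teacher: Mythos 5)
Your proposal is correct and matches the paper's own (one-line) derivation: the corollary is obtained exactly by feeding the Section~\ref{sec:random} algorithms (Theorem~\ref{thm:collision-detection} via Theorem~\ref{thm:t-static-reduction} for CD, and Theorem~\ref{thm:mod-prob-increase} for KN) into the transformation of Lemma~\ref{lem:lockout-cd} and simplifying $O(T+\log n)$. Your added caveat about verifying that the guard/loss-counter transformation preserves $\ep$-exclusion is a legitimate point of care that the paper glosses over, but it does not change the route.
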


\begin{corollary}
There exists a deterministic algorithm with makespan $O(\log n)$ solving
the mutual exclusion problem in the model in which collision detection is
available and a deterministic algorithm with makespan $O(n \log ^2 n)$ in the KN model.
\end{corollary}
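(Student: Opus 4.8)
The plan is to derive both algorithms as immediate consequences of the no-deadlock-to-no-lockout transformation established in Lemma~\ref{lem:lockout-cd}, fed with the deterministic building blocks of~\cite{CGKP}. First I would recall the two relevant deterministic protocols from~\cite{CGKP} that guarantee only the no-deadlock property: one with makespan $O(\log n)$ in the setting with collision detection, and one with makespan $O(n \log^2 n)$ in the KN setting. Both are genuine mutual exclusion algorithms (they satisfy exclusion, unobstructed exit, and no-deadlock), and since they are deterministic their makespan bounds are worst-case rather than merely expected; hence the transformed protocols will again be deterministic with worst-case makespan guarantees and, being produced by Lemma~\ref{lem:lockout-cd}, will additionally satisfy the no-lockout property promised by the surrounding Fairness section.

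For the collision-detection case I would invoke Lemma~\ref{lem:lockout-cd} with $T = O(\log n)$, using the ``collision detection is available'' branch of its hypothesis. The lemma yields a no-lockout algorithm with makespan $O(T + \log n) = O(\log n)$, so the additive $\log n$ overhead of the transformation is absorbed into the already $\Theta(\log n)$ makespan of the source algorithm. By Theorem~\ref{t:lower-log} this bound is optimal, even against randomized solutions, so nothing is lost. For the KN case I would apply Lemma~\ref{lem:lockout-cd} with $T = O(n \log^2 n)$, now using the ``number of nodes is known'' branch; the resulting makespan is $O(T + \log n) = O(n \log^2 n)$, since the $\log n$ term is dominated by $n \log^2 n$.

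The step to watch{\emdash}rather than a genuine obstacle{\emdash}is simply verifying that the precondition of Lemma~\ref{lem:lockout-cd}, namely that \emph{either} collision detection is available \emph{or} the number of nodes is known, is met in each instantiation. This is immediate: the two settings in which the source algorithms of~\cite{CGKP} operate coincide exactly with the two branches of the lemma's hypothesis, so the composition is legitimate in both cases and the corollary follows directly.
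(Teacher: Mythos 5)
Your proposal is correct and matches the paper's own derivation, which likewise obtains the corollary by feeding the deterministic no-deadlock algorithms of~\cite{CGKP} (makespan $O(\log n)$ with collision detection, $O(n\log^2 n)$ with known $n$) into the transformation of Lemma~\ref{lem:lockout-cd} and absorbing the additive $O(\log n)$ overhead. The extra observations you make{\emdash}that the transformation preserves determinism and that the two settings match the two branches of the lemma's hypothesis{\emdash}are exactly the (implicit) checks the paper relies on.
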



\end{document}